\def\BibTeX{{\rm B\kern-.05em{\sc i\kern-.025em b}\kern-.08em
		T\kern-.1667em\lower.7ex\hbox{E}\kern-.125emX}}
\DeclareMathOperator*{\argmax}{arg\,max}
\DeclareMathOperator{\E}{\mathbb{E}}
\newcommand{\subparagraph}{}
\def\BibTeX{{\rm B\kern-.05em{\sc i\kern-.025em b}\kern-.08em
    T\kern-.1667em\lower.7ex\hbox{E}\kern-.125emX}}
\titlespacing{\section}{0pt}{0.01ex}{0.2ex}
\titlespacing{\subsection}{0pt}{0.01ex}{0.1ex}
\titlespacing{\subsubsection}{0pt}{0.01ex}{0.1ex}
\newtheorem{prop}{Proposition}
\newtheorem{definition}{Definition}
\def\endthebibliography{%
	\def\@noitemerr{\@latex@warning{Empty `thebibliography' environment}}%
	\endlist
}
\let\OLDthebibliography\thebibliography
\renewcommand\thebibliography[1]{
	\OLDthebibliography{#1}
	\setlength{\parskip}{0pt}
	\setlength{\itemsep}{-1pt plus 0.0ex}
}
\begin{document}
\title{\vspace{-.5cm}Efficient Online Learning for Cognitive Radar-Cellular Coexistence via Contextual Thompson Sampling\vspace{-.3cm}
{\footnotesize \textsuperscript{}}
\thanks{To appear in Proc. IEEE Globecom, Taipei, Taiwan, Dec. 2020. $^\dag$C.E. Thornton and R.M. Buehrer are with Wireless@VT, Bradley Department of ECE, Virginia Tech, Blacksburg, VA, 24061. (\textit{Emails: $\{$thorntonc, buehrer$\}$@vt.edu}). $^\ddagger$A.F Martone is with the U.S Army Research Laboratory, Adelphi, MD 20783. (\textit{Email: anthony.f.martone.civ@mail.mil}). The support of the U.S Army Research Office (ARO) is gratefully acknowledged.}}
\author{\IEEEauthorblockN{Charles E. Thornton$^{\dag}$, R. Michael Buehrer$^{\dag}$, and Anthony F. Martone$^{\ddagger}$}}

\IEEEaftertitletext{\vspace{0.001\baselineskip}}
\vspace{-2cm}
\maketitle
\vspace{-3.2cm}
\begin{abstract}	
This paper describes a sequential, or \textit{online}, learning scheme for adaptive radar transmissions that facilitate spectrum sharing with a non-cooperative cellular network. First, the interference channel between the radar and a spatially distant cellular network is modeled. Then, a linear Contextual Bandit (CB) learning framework is applied to drive the radar's behavior. The fundamental trade-off between exploration and exploitation is balanced by a proposed Thompson Sampling (TS) algorithm, a pseudo-Bayesian approach which selects waveform parameters based on the posterior probability that a specific waveform is optimal, given discounted channel information as context. It is shown that the contextual TS approach converges more rapidly to behavior that minimizes mutual interference and maximizes spectrum utilization than comparable contextual bandit algorithms. Additionally, we show that the TS learning scheme results in a favorable SINR distribution compared to other online learning algorithms. Finally, the proposed TS algorithm is compared to a deep reinforcement learning model. We show that the TS algorithm maintains competitive performance with a more complex Deep Q-Network (DQN).
\end{abstract}
\begin{IEEEkeywords}
online learning, spectrum sharing, multi-armed bandit, cognitive radar, radar-cellular coexistence
\end{IEEEkeywords}

\vspace{.4cm}
\section{Introduction}
With the dawn of fifth-generation (5G) cellular technology, the coming years are expected to bring an unprecedented demand for radio frequency spectrum utilization in the 1-6GHz bands. As a result, governing bodies such as the Federal Communications Commission (FCC) and Third Generation Partnership Project (3GPP) have become heavily invested in establishing intelligent secondary access strategies in both licensed and unlicensed frequency bands \cite{factsheet}, \cite{release}. Since statically allocated radars are the largest incumbent consumers of bandwidth in the sub 6GHz bands, practical and robust strategies are necessary to guarantee that both radar and cellular communication systems can meet increasingly stringent performance demands in coexistence scenarios \cite{technical}.

\textit{Related Work.} In the recent literature on coexistence, many contributions have aimed to mitigate mutual interference from the perspective of a communications system through precoding, optimization of transmit waveforms, or estimation of interference channel state information \cite{mlcomms}. Additionally, an array of opportunistic \textit{Cognitive Radar} strategies have been proposed to enhance the performance and interoperability of future radar systems \cite{ontheroad}. Cognitive radar techniques have been recently extended to the domain of spectrum sharing \cite{sharing}, and can be both practical and effective given that a radar's transmitter and receiver are often co-located, allowing channel state information (CSI) to be obtained in real-time via spectrum sensing techniques.

In this light, reinforcement learning (RL) approaches have been proposed to enable cognitive radar systems to optimize transmission parameters given a history of spectrum observations and radar returns \cite{selvi,rlcomp}. However, the application of RL to high-dimensional problems encountered in the real-world such as spectrum sharing often requires a large amount of offline exploration, which be impractical in time-sensitive applications such as radar tracking. Additionally, the dimensionality of traditional RL approaches such as dynamic programming or \textit{Q}-learning can quickly become intractable as the size of the state-action space that defines the problem increases \cite{sutton}. Further, while RL techniques often perform well in cases where the environment obeys the Markov property \cite{selvi}, interference in wireless networks is often both stochastic and dynamic. Thus, better coexistence performance may be achieved by considering extended temporal statistics.  

To mitigate the complexity of solving for an optimal policy in the full RL problem, multi-armed bandit (MAB) approaches are often used for their simplicity and theoretical guarantees \cite{slivkins}. MAB approaches have shown great promise in developing a variety of spectrum access strategies \cite{dsa1},\cite{dsa2}. However, for a time-varying coexistence environment, it is necessary for each system to consider CSI to optimize transmission parameters. Here, we describe a linear contextual bandit (CB) formulation which generalizes the MAB framework by utilizing side information, or \textit{contexts}, derived from the history of transmissions to guide a cognitive radar's decision making such that coexistence with a fixed cellular network is fostered. Thompson sampling (TS), a heuristic for balancing exploration and exploitation in online decision problems used for its favorable practical and theoretical performance \cite{agrawal},\cite{infoTS}. TS is computationally efficient as posterior sampling and distribution parameter updates can be performed efficiently. Thus, TS a natural candidate for RL-driven radar spectrum sharing.

\textit{Contributions.} This work proposes a novel algorithm for radar spectrum sharing based on Thompson sampling that selects radar waveforms based on discounted CSI over extended time scales. The proposed algorithm is practical, and only limited in how fast samples can be drawn from the estimated posterior distribution. We demonstrate that the TS algorithm achieves lower regret in terms of an objective function based on mutual interference and bandwidth utilization than comparable Upper Confidence Bound (UCB) and $\epsilon$-greedy algorithms due to the increased speed of convergence. Further, the linear CB model requires significantly less exploration to learn desirable behavior than a more complex deep reinforcement learning (Deep RL) model, which is advantageous in radar applications where time-sensitive performance is critical, such as target tracking.\\
\section{Coexistence Model}
Consider non-cooperative coexistence between a frequency-agile cognitive radar and $N$ cellular base stations (BSs). The systems must share a $100$ MHz channel centered around $f_{c} = 3.5$ GHz. The channel is divided into $S$ equally-sized sub-bands. Time is slotted. Each time step, or pulse repetition interval, the radar transmits a Linear Frequency Modulated (LFM) chirp waveform, which may occupy any contiguous group of sub-bands within the channel. We assume the radar is located far enough from the cellular network such that small-scale fading effects are negligible. Thus, the interference power received at the radar from BS $i$ is dependent only on path-loss and large-scale shadow fading, and can be written as
\begin{equation}
I_{i} = P_{i}\left\lVert \mathbf{d}_{i} \right\rVert^{- \alpha} \exp({X_{i}}),      \hspace{.6cm} i = 1, ..., N,
\end{equation}
where $P_{i}$ is the BS's transmission power, $\lVert \mathbf{d}_{i} \rVert$ is the distance to the radar, $\alpha$ is the path loss exponent, and $\exp(X_{i})$ is a log-normal random variable to account for shadowing from large obstacles. The aggregate interference $I_{\texttt{agg}} = \sum_{i} I_{i}$ can thus be expressed as a sum of correlated log-normal random variables. This model has been widely studied in wireless communications and is shown to converge to a lognormal limit distribution as the number of BSs becomes large \cite{sum,hume}. We assume that each $X_{i} \sim N(\mu_{i}, \sigma^{2}_{i})$ has a correlation coefficient with $X_{j}$ given by
\begin{equation}
\rho_{ij} = \frac{\E[(X_{i}-\mu_{i})(X_{j}-\mu_{j})]}{\sigma_{i} \sigma_{j}} = \zeta_{i} \zeta_{j} \in [-1, 1].
\end{equation}

Then, with probability 1, the limit distribution as $N \rightarrow \infty$ is
\begin{equation}
I_{\texttt{agg}} \overset{lim}{\sim} \ln N(\mu_{\texttt{agg}}, \lambda^{2}_{\texttt{agg}})
\end{equation}
with parameters 
\begin{equation*}
\begin{split}
\mu_{\texttt{agg}} &= \frac{1}{N} \sum_{i=1}^{N}P_{i} \lVert \mathbf{d}_{i} \rVert^{-\alpha} \exp(\mu_{i}+ \frac{1}{2}\sigma^{2}_{i}),\\
\sigma^{2}_{\texttt{agg}} &= \mu^{2}_{\texttt{agg}}(\exp\left[ (\sigma_{i} \sigma_{j} \zeta_{i} \zeta_{j})^{2} \right] -1 ).
\end{split}
\end{equation*}

We can then characterize the probability that the radar sees harmful interference above power $\mathcal{T}$ as
\begin{equation*}
\mathbb{P}(I_{\texttt{agg}} > \mathcal{T}) = 1 - F_{I_{\texttt{agg}}}(\mathcal{T})
\end{equation*}
\begin{equation}
\label{eq:icdf}
= 1 - \left[ \frac{1}{2}+\frac{1}{2} \operatorname{erf} \left[\frac{\ln{x} - \mu_{\texttt{agg}}}{\sqrt{2} \sigma_{\texttt{agg}}} \right] \right],
\end{equation}
where $F_{I_{\texttt{agg}}}$ is the CDF of $I_{\texttt{agg}}$ and the error function is given by $\operatorname{erf}(p) = \frac{2}{\pi} \int_{0}^{p}e^{-t^{2}} dt$. 

In addition to spatial correlation between nodes $\rho_{ij}$, the cellular interference has temporal correlations due to the block fading structure of the channel. We assume a coherence time of $T_{c}$, meaning the interference is sampled randomly from the distribution of $I_{\texttt{agg}}$ every $T_{c}$ time-steps and remains stationary between samples. The temporal dependencies introduced by this block fading model introduces structure for the frequency-agile radar to learn. A further modeling consideration is the effect of ALOHA-like medium access control. At each time step, each node decides whether to transmit or remain idle with probability $p$ independently of the previous time steps. We denote the subset of transmitting BSs as $N_{\texttt{act}} \subset N$.

The radar must learn an effective transmission strategy to coexist with this interference channel under general conditions. Since this work considers a slow fading model, the radar must collect channel statistics due to the stochastic nature of the interference, but can also expect some temporal correlation based on the coherence time of the channel. We now proceed to a discussion of the learning framework.
\begin{figure}[t]
	\centering
	\includegraphics[scale=0.38]{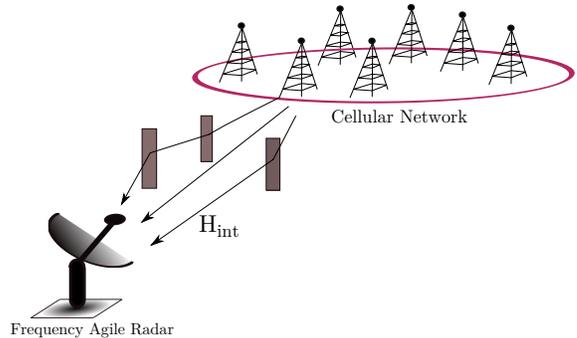}
	\caption{Diagram of the radar-cellular coexistence scenario. Aggregate interference from a cellular network is modeled as the sum of correlated log-normal random variables. }
	\label{fig:model}
\end{figure}
 \\
\section{Contextual Bandit Formulation and Thompson Sampling Algorithm}
This section describes the CB model formulation and the proposed TS algorithm used to drive the cognitive radar's transmission strategy. The radar's action space consists of $K$ arms, $\mathcal{A} = \{ \mathbf{a}_{1}, ..., \mathbf{a}_{K} \}$, where each $\mathbf{a}_{i}$ corresponds to a vector of \emph{contiguous} sub-channels. At each time step $t$, the radar senses the spectral occupancy of the interference $\mathbf{s}_{\texttt{com}}$ and assembles $d$-dimensional context vector for each arm $i$, given by 
\begin{equation}
\mathbf{x}_{i}(t) = [\beta_{1}, \beta_{2}, ..., \beta_{d}],
\end{equation} 
where each $\beta_{j} \in \mathbb{R}$ is a feature determined from the history of actions, contexts, and rewards
\begin{equation}
\begin{split}
\mathcal{H}_{t-1} = \{ \mathbf{a}(\tau), \mathbf{x}_{i}(\tau), r_{\mathbf{a}(\tau)}(\tau) \} \\ i = 1, ..., K,  \hspace{.4cm} \tau = 1, ..., t-1,
\end{split}
\end{equation}
The radar's goal is to maximize a reward signal $r_{i}(t) \in [0,1]$. We assume the reward has mean $\E[r_{i}(t)] = \mathbf{x}_{i}(t)^{T} \boldsymbol{\theta}_{t}$, where $\boldsymbol{\theta}_{t} \in \mathbb{R}^{d}$ is a potentially time-varying parameter vector the radar must learn to achieve optimal average rewards. In other words, at each time step, we assume there exists a vector $\boldsymbol{\theta}_{t}$ that maps the current context to a mean reward for each arm. Since the context is a collection of features associated with an action and the reward is received based on the radar's spectral occupancy, this assumption is reasonable. $r_{i}(t)$, or the reward for arm $i$ at time $t$, is given by

\begin{equation}
r_{i}(t) =
\left \{
\begin{array}{ll}
0, & N_{c} > 0 \\
\eta_{1}/(\eta_{2}N_{mo}), & N_{c} = 0, \hspace{0.1cm} N_{mo} > 0\\
1, &  N_{c} = 0, \hspace{0.1cm} N_{mo} = 0
\end{array} 
\right \},
\label{eq:rwd}
\end{equation}
\noindent where $\eta_{1}$ and $\eta_{2}$ are tunable parameters. $N_{c} \in \{0,1,...,S\}$ and $N_{mo} \in \{0,1,...,S-1\}$ are the number of \textit{collisions} and \textit{missed opportunities} the radar experiences at time $t$. 

\begin{definition}		
	Let the number of collisions $N_{c}$ correspond to the number of sub-channels utilized by both the radar and communication system, $N_{c} = \sum_{i = 1}^{N} \mathbbm{1} \{ \mathbf{a}_{t,i} = \boldsymbol{s}_{\texttt{com}} \}$, where $\mathbbm{1}\{\cdot\}$ is the binary indicator function and the notation $\mathbf{a}_{t,i}$ corresponds to the $i^{th}$ element of the action taken at time $t$. 
\end{definition}

\begin{definition}
	Let the number of missed opportunities $N_{mo}$ correspond to the difference between the largest group of contiguous available sub-channels the radar could possibly occupy, $\mathbf{a}^{*}_{t}$, and the number of those sub-channels the radar actually selects, $N_{mo} = \lVert \mathbf{a}^{*}_{t} \rVert - \sum_{i = 1}^{N} \mathbbm{1} \{ \mathbf{a}^{*}_{t,i} = \boldsymbol{a}_{t,i} \}$.
\end{definition}

The radar's preference for optimal bandwidth utilization versus interference-free transmission is then $\propto \eta_{1}/\eta_{2}$. For large $\eta_{1}/\eta_{2}$, we expect the radar to avoid potential interference channels, as the penalty for missed opportunities is relatively small. For large $\eta_{1}/\eta_{2}$ we expect the radar to act more `aggressively' by attempting to use the entire open bandwidth more often due to the increased penalty for missed opportunities. In a coexistence setting we seek an $\eta_{1}/\eta_{2}$ value which balances this trade-off. The radar requires enough bandwidth to achieve sufficient range resolution for target tracking, while the communication system requires bandwidth to meet throughput requirements. Sufficient SINR, or few collisions, is required for both systems to maintain energy efficiency.

A benefit of the CB framework opposed to other RL formulations is that the action space can be easily extended to a continuous variable \cite{slivkins}.
\begin{prop}
	The reward function given in (\ref{eq:rwd}) is Lipschitz continuous, i.e for two arms $\mathbf{a}_{j}$ and $\mathbf{a}_{k}$, $|\E[r_{\mathbf{a}_{j}}] - \E[r_{\mathbf{a}_{k}}]| \leq \mathcal{D}(j,k)$ where $\mathcal{D}$ is a metric of distance between arms known to the algorithm. Thus, adaptive discritization can be used to extend this formulation to a continuous set of actions with upper and lower bounded regret in a stationary setting.
\end{prop}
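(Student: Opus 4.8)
The plan is to produce an explicit metric $\mathcal{D}$ on the arm set and verify the stated bound directly. First I would identify each arm $\mathbf{a}_{i}$ with its sub-channel occupancy vector in $\{0,1\}^{S}$ (a single contiguous run of ones) and set $\mathcal{D}(j,k) = c\, d_{H}(\mathbf{a}_{j},\mathbf{a}_{k})$, where $d_{H}$ is the Hamming distance $|\mathbf{a}_{j}\triangle \mathbf{a}_{k}|$ and $c=c(\eta_{1},\eta_{2})$ is a constant fixed at the end of the argument. Any positive multiple of a metric is a metric, and since the arms are known to the radar so are their occupancy vectors, so $\mathcal{D}$ is a valid, algorithm-known distance. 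I would avoid routing the proof through the linear model $\E[r_{i}]=\mathbf{x}_{i}^{T}\boldsymbol{\theta}_{t}$, since that would require extra Lipschitz assumptions on the feature map $\mathbf{a}_{i}\mapsto \mathbf{x}_{i}$; a direct argument is cleaner.

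The core is a per-realization sensitivity estimate. Fixing the sensed interference occupancy $\mathbf{s}_{\texttt{com}}$ also fixes the largest available contiguous block $\mathbf{a}^{*}_{t}$, and crucially $\mathbf{a}^{*}_{t}$ depends on the interference, not on the played arm. Then both $N_{c}$ and $N_{mo}$ (as in the two definitions above) are sums over sub-channels of $0/1$ indicators of the chosen arm's occupancy, so flipping one coordinate of the occupancy vector changes each count by at most one; hence $|N_{c}(\mathbf{a}_{j})-N_{c}(\mathbf{a}_{k})|\le d_{H}(\mathbf{a}_{j},\mathbf{a}_{k})$ and $|N_{mo}(\mathbf{a}_{j})-N_{mo}(\mathbf{a}_{k})|\le d_{H}(\mathbf{a}_{j},\mathbf{a}_{k})$ for every realization. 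Next I would show the map $(N_{c},N_{mo})\mapsto r$ of (\ref{eq:rwd}) is Lipschitz on the integer lattice: if either collision count is positive the reward is $0$, so a unit change in $N_{c}$ moves $r$ by at most $1$; while on $\{N_{c}=0\}$ the reward depends on $N_{mo}$ only through $\eta_{1}/(\eta_{2}N_{mo})$ (with $r=1$ at $N_{mo}=0$), and $|1/m-1/m'|\le |m-m'|$ for positive integers. Composing these, then taking the expectation over $\mathbf{s}_{\texttt{com}}$ and the ALOHA/block-fading randomness (triangle inequality, linearity of expectation), gives $|\E[r_{\mathbf{a}_{j}}]-\E[r_{\mathbf{a}_{k}}]|\le 2(1+\eta_{1}/\eta_{2})\,d_{H}(\mathbf{a}_{j},\mathbf{a}_{k})$, so the choice $c=2(1+\eta_{1}/\eta_{2})$ yields exactly $|\E[r_{\mathbf{a}_{j}}]-\E[r_{\mathbf{a}_{k}}]|\le\mathcal{D}(j,k)$.

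With a Lipschitz mean-reward surface over a known metric, I would then invoke the standard adaptive-discretization / zooming results for Lipschitz bandits \cite{slivkins}: embedding the contiguous-block actions into the bounded metric space $(\mathcal{A},\mathcal{D})$, the zooming algorithm achieves regret governed by the zooming dimension of the instance, while the generic Lipschitz-bandit lower bound applies in the stationary regime, giving the claimed simultaneous upper and lower regret bounds.

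The main obstacle is not the algebra but the discontinuities of (\ref{eq:rwd}) --- the jump to zero reward the instant a collision occurs, and the isolated case $N_{mo}=0$. The point is that these jumps live on the integer lattice of $(N_{c},N_{mo})$: no vanishingly small arm perturbation can straddle the $N_{c}=0$ boundary, because crossing it means adding or removing an entire sub-channel, which costs a full unit of $d_{H}$. Choosing $\mathcal{D}$ coarse enough (the factor $c$) to dominate that unit jump is precisely what legitimizes the continuous extension; a secondary point to keep honest is that $\mathbf{a}^{*}_{t}$ must be held common across the two arms in the comparison, which holds since it is a function of the interference alone.
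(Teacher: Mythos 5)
Your proof takes the same route as the paper's own: the Hamming distance on sub-channel occupancy vectors as $\mathcal{D}$, with the bound following from the fact that the reward depends on the chosen arm only through its spectral overlap with the interference and the optimal contiguous block. The difference is only one of rigor --- you supply the quantitative per-realization sensitivity argument (unit changes in $N_{c}$ and $N_{mo}$ per flipped coordinate, Lipschitzness of $(N_{c},N_{mo})\mapsto r$ on the integer lattice, and the arm-independence of $\mathbf{a}^{*}_{t}$) that the paper's two-sentence sketch merely asserts, and your argument is correct.
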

\begin{proof} \renewcommand{\qedsymbol}{}
	Let $\mathcal{D}(j,k)$ be the Hamming distance between the vectors of binary values representing $\mathbf{a}_{j}$ and $\mathbf{a}_{k}$. Since $r_{i}(t)$ is calculated based on missed opportunities and collisions, if $j$ and $k$ have a small Hamming distance then $\E[r_{\mathbf{a}_j}] \approx \E[r_{\mathbf{a}_{k}}]$ due to spectral overlap.
\end{proof}
\vspace{-2mm}
By learning to maximize rewards, the radar is equivalently aiming to minimize the total \textit{regret} experienced before time $T$. Regret corresponds to the total difference between the agent's received reward and the reward obtained by taking the best action at each time step given by
\begin{equation}
\label{eq:regret}
\mathrm{Regret} = \textstyle \sum_{t = 1}^{T} r_{\mathbf{a}^{*}(t)}(t) - r_{\mathbf{a}(t)}(t),
\end{equation}
where $\mathbf{a}^{*}(t)$ is the action with the highest expected reward at time $t$ and $\mathbf{a}(t)$ is the action actually selected by the radar at time $t$. In the case of linear payoffs, (\ref{eq:regret}) is equivalent to
\begin{equation*}
\mathbf{x}_{\mathbf{a}^{*}(t)}(t)^{T} \boldsymbol{\theta}_{t} - \mathbf{x}_{i}^{T} \boldsymbol{\theta}_{t}.
\vspace{-.1mm}
\end{equation*}

As the true value of $\boldsymbol{\theta}_{t}$ is not always known to the algorithm, we can retroactively compute the regret at the next time step using $\mathbf{a}^{*}(t-1)$. However, in contrast to some bandit formulations which consider the case of \textit{full-feedback}, or knowledge of the rewards associated with each action at the next time step, the cognitive radar system discussed here only receives feedback based on the action taken.

\begin{algorithm}[t]
	\setlength{\textfloatsep}{0pt}
	\label{algo:ts}
	\caption{Adaptive Radar Thompson Sampling}
	\SetAlgoLined
	Initialize parameters $\mathbf{B} = \mathbf{I}_{d}$, $\hat{\boldsymbol{\theta}}_{t} = \mathbf{0}_{d}$, $\mathbf{f} = \mathbf{0}_{d}$\\
	\For{t = 2, ..., T}{
		Sense cellular interference $\mathbf{s}_{\texttt{com}}$;\vspace{0.1cm}
		
		Using $\mathbf{s}_{\texttt{com}}$ and $\mathcal{H}_{t-1}$ assemble \textit{discounted} context $\mathbf{x}_{i}(t) = \{\beta_{1},...,\beta_{d}\} \; \; \forall \; i$;\vspace{0.1cm}
		
		Sample $\tilde{\boldsymbol{\theta}}_{t} \sim \mathcal{N}(\hat{\boldsymbol{\theta}}_{t}, v^{2} \mathbf{B}^{-1})$;\vspace{0.1cm}
		
		Create constrained action space $\mathcal{A'}_{t} = \{ \mathbf{a} \in \mathcal{A}: \E[r_{i}(t)|\mathbf{x}_{t} = x] > \hat{r} \}$; \vspace{0.1cm}
		
		Select LFM waveform $\mathbf{a}(t) = \argmax_{i} \mathbf{x}_{i}(t)^{T} \tilde{\boldsymbol{\theta}}_{t}$;\vspace{0.1cm}
		
		Observe reward $r_{i}(t)$ from (\ref{eq:rwd});\vspace{0.1cm}
		
		Update history $\mathcal{H}_{t}$;\vspace{0.1cm}
		
		Update parameters $\mathbf{B} = \mathbf{B} + \mathbf{x}_{\mathbf{a}(t)} \mathbf{x}_{\mathbf{a}(t)}^{T}$, $\mathbf{f} = \mathbf{f} + \mathbf{x}_{\mathbf{a}(t)}(t)r_{\mathbf{a}(t)}(t)$, and $\hat{\mu}=\mathbf{B}^{-1}f$;
	}
\end{algorithm}	

While the interference channel model here follows a log-normal limiting distribution, we seek to optimize performance in a general setting. The TS algorithm discussed here uses the normal-normal conjugacy property to formulate a posterior distribution, from which we sample estimate $\tilde{\boldsymbol{\theta}}_{t}$. We now proceed to a description of the algorithm, which is also seen in Algorithm 1.

Given context $\mathbf{x}_{i}(t)$ and parameter $\boldsymbol{\theta}_{t}$, the likelihood of reward $r_{i}(t)$ is
\begin{equation}
\mathcal{L}(r_{i}(t) | \boldsymbol{\theta}_{t}) \sim \mathcal{N}(\mathbf{x}_{i}(t)^{T} \boldsymbol{\theta}_{t}, v^{2}).
\end{equation}
We can then place a Gaussian prior distribution on $\boldsymbol{\theta}$ given by
\begin{equation}
\mathbb{P}(\boldsymbol{\theta}_{t}) \sim \mathcal{N} (\hat{\boldsymbol{\theta}}_{t}, v^{2} \mathbf{B}(t)^{-1}),
\end{equation}
where $v$ is an exploration parameter that specifies the algorithm. Applying Bayes' rule, the posterior distribution on $\boldsymbol{\theta}_{t}$ can then be written up to proportionality as
\begin{equation}
\begin{split}
\mathbb{P}(\tilde{\boldsymbol{\theta}_{t}} | r_{i}(t)) & \propto \mathcal{L}(r_{i}(t) | \boldsymbol{\theta}_{t}) \mathbb{P}(\boldsymbol{\theta}_{t}) \\
& \propto \mathcal{N}(\hat{\boldsymbol{\theta}}_{t}, v^{2}\mathbf{B}(t)^{-1}),
\end{split}
\end{equation}
\noindent where the posterior mean and covariance matrix can be expressed as
\begin{equation*}
\begin{split}
&\mathbf{B}(t) = \mathbf{I}_{d} + \textstyle \sum_{\tau = 1}^{t-1} \mathbf{x}_{\mathbf{a}(\tau)}(\tau) \mathbf{x}_{\mathbf{a}(\tau)}^{T}(\tau), \\
&\hat{\boldsymbol{\theta}}_{t} = \mathbf{B}(t)^{-1} \textstyle \sum_{\tau=1}^{t-1} \mathbf{x}_{\mathbf{a}(\tau)}(\tau) r_{\mathbf{a}(\tau)}(\tau),
\end{split}
\end{equation*}
\noindent where $\mathbf{I}_{d}$ is the $d$-dimensional identity matrix. Thus, the posterior estimate $\tilde{\boldsymbol{\theta}}_{t}$ can be efficiently sampled from a $d$-dimensional multivariate normal distribution. The distribution parameters $\hat{\boldsymbol{\theta}}_{t}$ and $\mathbf{B}^{-1}$ can also be easily updated based on the context and reward received at time $t$. To update $\mathbf{B}^{-1}$ without taking a matrix inversion at every step, we can apply the Sherman-Morrison formula, which allows the update to be efficiently computed by
\begin{equation}
(\mathbf{B}+ \mathbf{x}_{\mathbf{a}(t)} \mathbf{x}^{T}_{\mathbf{a}(t)})^{-1} = \mathbf{B}^{-1} + \frac{\mathbf{B}^{-1}\mathbf{x}_{\mathbf{a}(t)} \mathbf{x}^{T}_{\mathbf{a}(t)} \mathbf{B}^{-1}}{1 + \mathbf{x}^{T}_{\mathbf{a}(t)} \mathbf{B}^{-1} \mathbf{x}_{\mathbf{a}(t)}}.
\end{equation} 

Based on the analysis in \cite{infoTS}, we know that the radar's estimated model $\boldsymbol{\theta}_{t}$ incurs some regret whenever the Kullback-Leibler (KL) divergence between estimated $\tilde{\boldsymbol{\theta}}_{t}$ and true model $\boldsymbol{\theta}^{*}_{t}$, $D(\boldsymbol{\theta}^{*}_{a,t} || \tilde{\boldsymbol{\theta}}_{a,t}) > 0$. Thus, a TS algorithm will work well in practice whenever the estimated model satisfies our assumption of linear payoffs, which implies that the spectrum feature set must accurately describe the mapping between context-action pairs and rewards. However, $\boldsymbol{\theta}_{t}$ in general evolves according to a time-varying stochastic process and becomes difficult to predict as the entropy of the distribution on the optimal action $H(\mathbb{P}(\mathbf{a}^{*}_{t}) )$ grows. Thus, the cognitive radar's performance will be limited by the temporal correlations in the observed interference.

\begin{figure*}[t]
	\vspace{-4mm}
	\centering
	\includegraphics[scale=0.47]{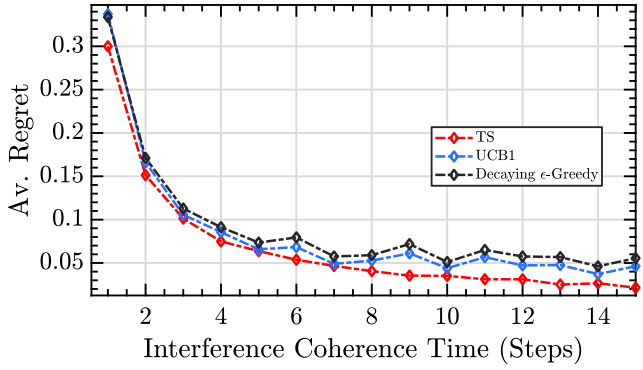}
	\hspace{3mm}
	\includegraphics[scale=0.48]{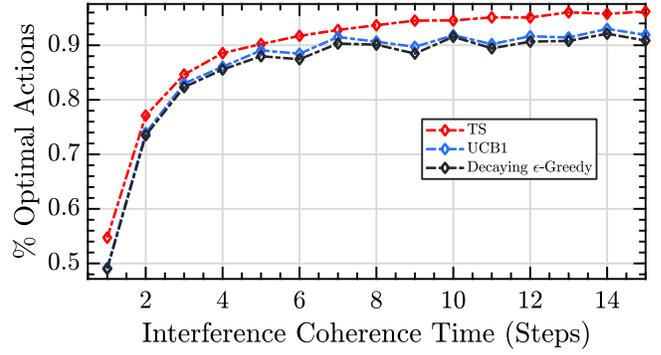}
	\caption{LEFT: Average regret per time step of online learning algorithms with varying channel coherence time. RIGHT: Probability each algorithm selects the optimal action over with channel coherence time. Each data point corresponds to $10^{4}$ simulated time steps.}
	\label{fig:optComp}
	\vspace{-7mm}
\end{figure*}

Since $r_{i}(t)$ is a linear function of $N_{c}$ and $N_{mo}$, the radar can reasonably estimate $\E[r_{i}(t)]$ from the history of collisions, missed opportunities, and rewards for a context-action pair or an action across all contexts. Further, the radar can consider either recent feedback or information averaged across many time steps depending on the specification of $\mathbf{x}_{i}(t)$. Here, we consider context which incorporates the following features
\begin{equation}
\begin{split}
\beta_{1} &= \frac{1}{n_{p}} \textstyle \sum_{k = 1}^{n_{p}} r_{i}(k), \\
\beta_{2} &= \frac{1}{n_{p} - 1} \textstyle \sum_{k = 1}^{n_{p}} (r_{i}(k) - \E[r_{i}])^{2},\\
\beta_{3} &= r_{i}(n_{p}),
\end{split}
\end{equation}
where $n_{p}$ is the number of times arm $i$ has been played and $k$ is an index of the history. This context formulation weighs the average reward received, variance of rewards, and the previous reward to account for changing channel conditions. 

In coexistence settings, the distribution of the interference channel can vary over time and is in general non-stationary, which in the bandit literature is known as a \textit{restless bandit} problem \cite{slivkins}. In this framework, we model uncertainty about the channel by considering parameter vector $\boldsymbol{\theta}_{t}$ that evolves over time. Uncertainty about the channel must also be reflected in the posterior distribution, which can be modeled by updating the posterior distribution parameters to weigh recent information more heavily than observations from the distant past, a process known as \textit{discounting}.

Here, we inject uncertainty by ignoring observations more distant than $\tau$ time steps in the past. This prevents the posterior mass from concentrating heavily around one value due to the limited number of observations. Additionally, when assembling context vectors, we weigh observations by factor $\gamma^{k}$, where $k$ is the number of time steps elapsed since the observation. As the posterior covariance matrix is updated based on $\mathbf{x}_{\mathbf{a}(t)} \mathbf{x}_{\mathbf{a}(t)}^{T}$, discounted observations reflect an increase in uncertainty as the channel dynamics change.

Another modeling consideration for the frequency-agile radar system is \textit{caution} about potentially hazardous actions. We assume that both the radar and cellular system wish to maintain some minimum outage probability $\mathbb{P}(\texttt{SINR} > \mathcal{T})$. Thus at each round $t$, we consider the constrained action space
\begin{equation}
\mathcal{A'}_{t} = \{ \mathbf{a} \in A: \E[r_{\mathbf{a}}(t)|\mathbf{x}_{t} = x] > \hat{r} \},
\end{equation}
where $\hat{r}$ is a reward value such that $\E[\texttt{SINR}] > \mathcal{T}$. To estimate $\texttt{SINR}$ at the radar, we use
\begin{equation}
\texttt{SINR} = \frac{P_{r}}{P_{I}+P_{N}} = \frac{P_{r}e^{-\psi}}{P_{N}+ \sum_{i = 1}^{N_{c}} P_{i} d_{i}^{-\alpha} e^{-x_{i}}},
\end{equation}
where $\psi \sim N(\mu_{\psi},\sigma^{2}_{\psi})$ is a random variable to account for fluctuations in received power due to the target model. While no closed form distribution exists for the radar's $\texttt{SINR}$ and outage probability, they can be lower bounded, as in \cite{bound}.

Now that the linear CB model and proposed TS algorithm have been described, we proceed to compare the learning framework to other online learning schemes as well as a Deep RL approach based on Deep \textit{Q}-learning.\\

\section{Simulation Study}
To validate the utility of the proposed coexistence model and TS algorithm, several simulation comparisons are presented. Here, the 100MHz channel is divided into $S = 10$ equally sized sub-channels. The communication system bandwidth is $20$MHz, and interference occupies the second and third sub-channels when present. There are a total of $N = 120$ randomly scattered cellular BS's, each located $4$-$6$ Km from the radar. The path loss exponent $\alpha=4$ characterizes the shadowing environment. The base station transmission power $P_{i}$ ranges from $40$-$46.5$ dBm, consistent with the upper range of the 3GPP standard. The reward parameters are $\eta_{1} = 10$ and $\eta_{2} = 11$, which effectively balances missed opportunities and collisions for the case of $S = 10$ sub-channels.

\subsection{Comparison to Other Online Learning Algorithms}

The proposed TS algorithm is compared to a decaying $\epsilon$-greedy algorithm, which selects a random action with probability $1- \epsilon_{t}$ and action 
\begin{equation}
\mathbf{a}^{*} = \argmax_{\mathbf{a} \in \mathcal{A}} \{ \E{[r_{t+1}|\mathbf{x}(t), \mathcal{H}_{t-1}, \mathbf{a}(t) = \mathbf{a}]} \}
\end{equation}
with probability $\epsilon_{t}$. We choose an initial value $\epsilon_{0} = .95$ and induce a decay of $\epsilon_{t} = \epsilon_{t-1}  - .001$ every time step. Additionally, we compare performance to that of the UCB1 algorithm, based on the well-known \textit{upper confidence bound} family of algorithms which can be thought of as a frequentist companion to TS, as both expected reward and uncertainty are considered. The UCB1 algorithm is given in Algorithm 2.

\NoCaptionOfAlgo		
\begin{algorithm}[h]
	\SetAlgoLined
	Play each arm once;\\
	Then play $\max{[\hat{\xi}_{k}(t)+\xi_{t}(a)]}$, where $\hat{\xi}_{k}(t)$ is the expected reward for arm $k$ at $t$, $\xi(a) = \sqrt{\frac{2 \log (t)}{n_{k}(t)}}$ is the \emph{confidence radius}, and $n_{k}(t)$ is the number of times arm $k$ has been played, until time $T$;	
	\caption{\textbf{Algorithm 2}: UCB1}
\end{algorithm}	

First, we seek to analyze the effects of environmental uncertainty on the performance of the proposed TS algorithm through varying channel coherence time $T_{c}$. For a given value of $T_{c}$, we assume that the cellular interference is constant for at least $T_{c}$ time steps, so we sample the sum of lognormal interferers at integer multiples of $T_{c}$. Based on the information theoretic interpretation of TS, we expect performance to improve smoothly with increasing $T_{c}$ as the best action $\mathbf{a}^{*}(t)$ becomes less random.

In Fig. \ref{fig:optComp}, we see that as the channel coherence time increases, each of the online learning algorithms experiences a lower average regret and selects optimal $\mathbf{a}^{*}(t)$ a higher percentage of the time. The proposed TS algorithm has the highest selection of optimal actions for each value of $T_{c}$ tested and performs particularly well for longer coherence times. The proposed TS algorithm also performs at least as well as UCB1 and decaying $\epsilon$-greedy in terms of average reward, with the gap between TS and the other algorithms becoming larger for longer coherence times.

\begin{figure}[b]
	\vspace{-3mm}
	\centering
	\includegraphics[scale=0.5]{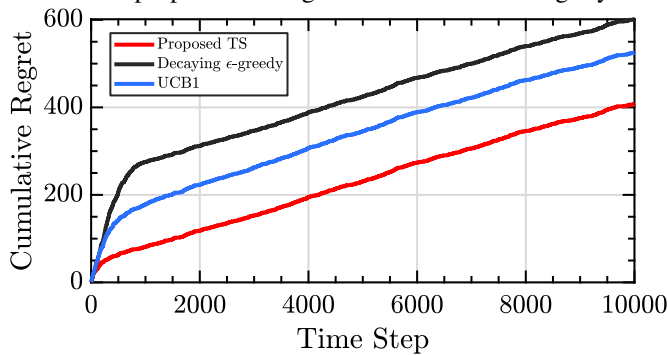}
	\caption{Cumulative regret of online learning algorithms for the case of $T_{c}$ = 8 steps. The proposed TS algorithm demonstrates quicker convergence than UCB1 and decaying $\epsilon$-greedy.}
	\label{fig:convergenceSpeed}
\end{figure}

To gain some insight as to why the TS algorithm appears to perform well, we can look to convergence in terms of cumulative regret over a fixed learning period. In Fig. \ref{fig:convergenceSpeed}, we see that the TS algorithm begins to convergence quicker in terms of regret than the UCB1 and decaying $\epsilon$-greedy algorithms. However, upon convergence all three online learning algorithms perform similarly as they each learn the same linear feature set. Thus we can attribute radar performance difference between the online learning algorithms to rate of convergence.

While these results indicate good performance in terms of the proposed spectrum-sharing oriented reward function, we also seek to optimize radar performance metrics of interest. To further characterize the performance of the proposed TS algorithm, we examine the distribution of $\texttt{SINR}$ at the radar. In Fig. \ref{fig:SINR}, we see the empirical CDF of $\texttt{SINR}$ from $10^{4}$ radar transmissions using a fixed full bandwidth approach in addition to TS, UCB1, and decaying $\epsilon$-greedy. Firstly, we observe a significant improvement from utilizing the adaptive online learning schemes in comparison to traditional fixed bandwidth radar. Further, we notice that in the tail of the distribution, we see that the proposed TS algorithm maintains a slightly lower probability of very low SINR values than UCB1 or $\epsilon$-greedy. This can be attributed to the reduced time spent exploring potentially costly actions. Since very low $\texttt{SINR}$ values are the primary cause of missed detections, we immediately see the value of reduced exploration from a performance standpoint.

\begin{figure}[b]
	\centering	
	\includegraphics[scale=0.47]{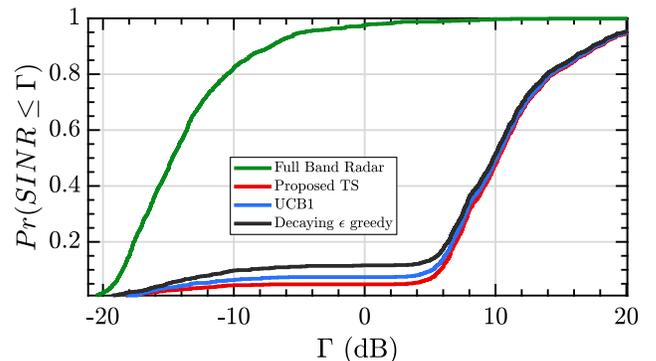}
	\caption{CDF of observed $\texttt{SINR}$ values for full bandwidth radar compared to the proposed TS, UCB1, and decaying $\epsilon$-greedy approaches during $10^{4}$ time steps of radar operation in the case of $T_{c} = 10$ steps.}
	\label{fig:SINR}
\end{figure}

\subsection{Comparison to Deep Reinforcement Learning}

\begin{figure}[t]
	\centering
	\includegraphics[scale=0.43]{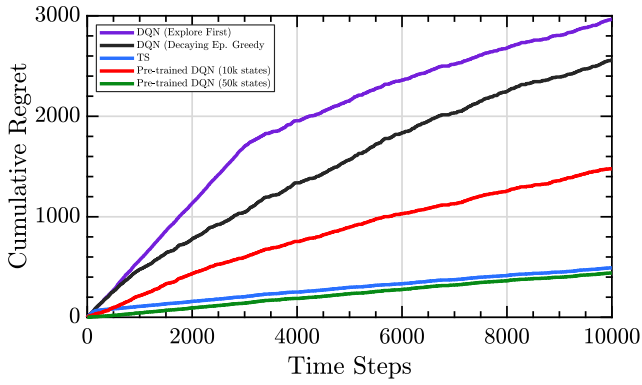}
	\caption{A comparison of the regret accumulated by a DQN with various amounts of training to the proposed TS algorithm.}
	\label{fig:fullCom}
\end{figure}
Previous work \cite{rlcomp}, has proposed Deep RL algorithms to control spectrum sharing radar systems. In RL, the radar's actions are assumed to have an influence on the future states of the interference environment. However, in radar-cellular coexistence scenarios, this may or may not be the case depending on the spectral environment and cellular network configuration. If a function approximation approach is used, such as $Q$-learning, then the decision maker can learn online and adapt to changing environmental conditions.

\begin{figure}[t]
	\vspace{-2mm}
	\centering
	\includegraphics[scale=0.5]{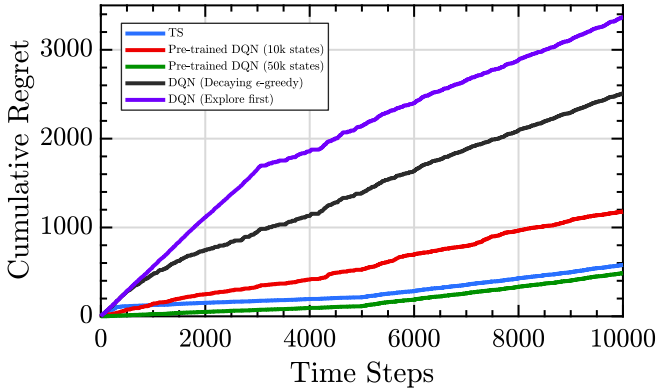}
	\caption{Cumulative regret accumulated by a DQN with various training experiences and the proposed TS algorithm when the channel changes from $T_{c} = 14$ steps to $T_{c} = 4$ steps halfway through a $10^{4}$ step learning period.}
	\label{fig:gen}
\end{figure}
A notable advantage of Deep RL over linear models is that it can be used to approximate nonlinear mappings between environmental states and rewards. However, deep neural networks are often time-consuming and computationally burdensome to train. Additionally, NNs consist of large parameter spaces and often require a great deal of exploration to find a set of weights which leads to good performance. In Fig. \ref{fig:fullCom}, we compare the regret incurred by a 3-layer Deep $Q$-Network (DQN) to the proposed TS algorithm for a lognormal sum interference channel with $T_{c} = 7$ steps. 

We see that when the DQN-enabled radar is allowed 3,000 steps to explore uniformly and then picks the greedy action thereafter, the radar continues to incur a large amount of regret during exploitation. When the DQN explores online with a decaying $\epsilon$-greedy strategy, convergence is smoother than the explore-first strategy. However, upon convergence, performance is still much worse than the proposed TS algorithm. When the network is trained offline for $10^{4}$ steps and takes the greedy action during the entire trial of $10^{4}$ pulses, performance begins to improve drastically. With $5 \times 10^{4}$ steps of offline training, the DQN achieves better performance than the proposed TS algorithm.

In Fig. \ref{fig:gen} we observe the cumulative regret when the interference changes from $T_{c} = 14$ steps to $T_{c} = 4$ steps halfway through the evaluation run of $10^{4}$ pulses. We see that even when the DQN is pre-trained for $5 \times 10^{4}$ PRIs, it adapts to the change in interference complexity no more effectively than the proposed TS algorithm. This result shows that in non-stationary interference scenarios, the contextual TS model is an effective approach.

Thus, while Deep RL presents a powerful nonlinear hypothesis class for learning complex interference patterns with online learning capabilities, a large amount of exploration is often necessary to achieve better performance than a linear contextual bandit model.\\

\section{Conclusions}
This work has presented a contextual Thompson sampling strategy for coexistence between a cognitive radar and a cellular network modeled as a sum of lognormal interference sources. A Thompson Sampling (TS) algorithm was presented to efficiently balance the fundamental trade-off between exploration and exploitation. Due to an increased speed of convergence, the proposed TS algorithm performs better than comparable online learning algorithms in terms of regret calculated from a weighted combination of interference avoidance (collisions) and bandwidth utilization (missed opportunities) as well as in the distribution of observed $\texttt{SINR}$. 

Further, the proposed contextual bandit TS approach provides some key improvements over Deep RL cognitive control. The TS algorithm allows for efficient exploration of the state space and can scale to larger action spaces than the Deep RL approach, resulting in superior convergence time when learning online. Further, we show that while Deep RL is able to achieve better asymptotic performance, a significant offline exploration phase is necessary. 

Given that radar applications often demand rapid reaction to find and reliably track a moving target with minimal mutual interference, efficient online learning is an important consideration for spectrum sharing radar systems. Future work could include modeling a cognitive communications system. Additionally, this could be extended to balance the spectral and energy efficiency in a distributed radar network. \\

\scriptsize
\bibliographystyle{IEEEtran}
\bibliography{bibli}

\end{document}